\newcounter{resultnum}[section]\setcounter{resultnum}{0}
\newcounter{conclusionnum}[section]\setcounter{conclusionnum}{0}
\newcounter{conditionnum}[section]\setcounter{conditionnum}{0}
\newcounter{conjecturenum}[section]\setcounter{conjecturenum}{0}
\newcounter{examplenum}[section]\setcounter{examplenum}{0}
\newcounter{exercisenum}[section]\setcounter{exercisenum}{0}
\newtheorem{lemma}{Lemma}[section]
\newcounter{lemmanum}[section]\setcounter{lemmanum}{0}
\newcounter{notationnum}[section]\setcounter{notationnum}{0}
\newtheorem{theorem}{Theorem}[section]
\newcounter{theoremnum}[section]\setcounter{theoremnum}{0}
\newtheorem{definition}{Definition}[section]
\newcounter{definitionnum}[section]\setcounter{definitionnum}{0}
\newcounter{corollarynum}[section]\setcounter{corollarynum}{0}
\newcounter{remarknum}[section]\setcounter{remarknum}{0}
\newtheorem{proposition}{Proposition}[section]
\newcounter{propositionnum}[section]\setcounter{propositionnum}{0}
\newcounter{acknowledgementnum}[section]\setcounter{acknowledgementnum}{0}
\newcounter{algorithmnum}[section]\setcounter{algorithmnum}{0}
\newcounter{axiomnum}[section]\setcounter{axiomnum}{0}
\newcounter{casenum}[section]\setcounter{casenum}{0}
\newcounter{claimnum}[section]\setcounter{claimnum}{0}
\newcounter{summarynum}[section]\setcounter{summarynum}{0}
\newcounter{problemnum}[section]\setcounter{problemnum}{0}
\newenvironment{proof}[1][]{\textbf{Proof.} }{}
\begin{document}

\title{Fedosov Quantization of Fractional Lagrange Spaces }
\date{July 4, 2010}
\author{\textbf{Dumitru Baleanu}\thanks{%
On leave of absence from Institute of Space Sciences, P. O. Box, MG-23, R
76900, Magurele--Bucharest, Romania, \newline
E--mails: dumitru@cancaya.edu.tr, baleanu@venus.nipne.ro} \\
\textsl{\small Department of Mathematics and Computer Sciences,} \\
\textsl{\small \c Cankaya University, 06530, Ankara, Turkey } \\
\and 
\textbf{Sergiu I. Vacaru} \thanks{%
sergiu.vacaru@uaic.ro, Sergiu.Vacaru@gmail.com \newline
http://www.scribd.com/people/view/1455460-sergiu} \and \textsl{\small %
Science Department, University "Al. I. Cuza" Ia\c si, } \\
\textsl{\small 54, Lascar Catargi street, Ia\c si, Romania, 700107 } }
\maketitle

\begin{abstract}
The main goal of this work is to perform a nonolonomic deformation (Fedosov
type) quantization of fractional Lagrange geometries. The constructions are
provided for a (fractional) almost K\"{a}hler model encoding equivalently
all data for fractional Euler--Lagrange equations with Caputo fractional
derivative. For homogeneous generating Finsler functions, the geometric
models contain quantum versions of fractional Finsler spaces. The scheme can
be generalized for fractional Hamilton systems and various models of
fractional classical and quantum gravity. We conclude that the approach with
Caputo fractional derivative allows us to geometrize both classical and
quantum (Fedosov type) fractional regular Lagrange interactions. \vskip0.2cm
\textbf{Keywords:}\ fractional Lagrange and Finsler geometry, almost K\"{a}%
hler spaces, nonlinear connections, deformation quantization, fractional
Fedosov spaces.

\vskip3pt

MSC:\ 26A33, 46L65, 32Q60, 53C60, 53C99, 70S05

PACS:\ 03.70.+k, 45.10Hj, 02.90.+p, 02.40.Yy, 45.20.Jj
\end{abstract}


\section{Introduction}

This paper is a ''quantum'' partner work (in the meaning of deformation
quantization) of the article \cite{bv1} on almost K\"{a}hler models of
fractional Lagrange--Finsler geometries. It belongs to a series of our works
on fractional (i.e. non--integer dimension) nonholonomic spaces, theirs
Ricci flows and certain fractional type gravity and geometric mechanics
models \cite{vrfrf,vrfrg}\footnote{%
readers are recommended to consult the main results and conventions in
advance}.

There are some preliminary attempts (see, for instance, \cite%
{lask1,lask2,nab04}) to quantize fractional mechanical models and field
interactions as generalized quantum mechanics and related fractional quantum
field theories. Such constructions are for some particular cases of
fractional calculus and physical models. It is not clear if and how a
general formalism encoding quantum fractional theories can be elaborated.

Following the geometry of nonholonomic distributions modelling geometries of
non--integer (i.e. fractional) dimensions, a self--consistent quantization
formalism can be elaborated following the Fedosov deformation quantization %
\cite{fed1,fed2}. The original constructions were provided for classical and
quantum K\"{a}hler geometries. Latter, the approach was generalized for
almost K\"{a}hler geometries \cite{karabeg1} which allowed to include into
the quantization deformation scheme various types of Lagrange--Finsler,
Hamilton--Cartan and Einstein spaces and generalizations, see papes \cite%
{vfedq1,vfedq2,vfedq3,vfed4} and references therein.

A geometrization of fractional calculus and various types of fractional
mechanical and field theories is possible for the so--called Caputo
fractional derivative, see details in \cite{vrfrf,vrfrg}. For fractional
regular Lagrange mechanical \ models, such effective geometries can be
derived as almost K\"{a}hler configurations with more ''rich'' geometric
nonholonomic and/or fractional structures \cite{bv1}. This suggests a
''realistic'' possibility to quantize in general form, following methods of
deformation quantiation, various types of \ fractional geometries and
physical theories which via nonholonomic deformations can be re--defined as
some types of almost K\"{a}hler spaces. In this article, we show how to
perform such a program for fractional Lagrange spaces.

This work is organized in the form: In section \ref{s2}, we remember the
most important properties and formulas on Caputo fractional derivatives and
related nonholonomic (co) frame formalism for fractional tangent bundles.
Section \ref{s3} is devoted to definition of Fedosov operators for
fractional Lagrange spaces. The main results on deformation quantization of
fractional Lagrange mechanics are provided in section \ref{s4}. Finally, we
conclude and discuss the results in section \ref{s5}.

\section{Preliminaries:\ Fractional Calculus and almost K\"{a}hler Geometry}

\label{s2}We outline some necessary formulas and results on fractional
calculus and nonholonomic geometry, see details and notation conventions in %
\cite{bv1,vrfrf,vrfrg}.\footnote{%
We use ''up'' and ''low'' left labels which are convenient to be introduced
in order to not create confusions with a number of ''horizontal'' and
''vertical'' right indices and labels which must be distinguished if the
manifolds are provided with N--connection structure. In our papers, we work
with mixed sets of "fractional" and "integer" dimensions (and holonomic and
nonholonomic variables etc). This makes the systems of labels and notations
for geometric objects to be quite sophisticate even in coordinate free form
formalisms. Unfortunately, further simplifications seem to be not possible.}

\subsection{Caputo fractional derivatives}

There is a class of fractional derivatives which resulting in zero for
actions on constants. This property is crucial for constructing geometric
models of theories with fractional calculus.

We define and denote the fractional left, respectively, right Caputo
derivatives in the forms
\begin{eqnarray}
&&\ _{\ _{1}x}\overset{\alpha }{\underline{\partial }}_{x}f(x):=\frac{1}{%
\Gamma (s-\alpha )}\int\limits_{\ \ _{1}x}^{x}(x-\ x^{\prime })^{s-\alpha
-1}\left( \frac{\partial }{\partial x^{\prime }}\right) ^{s}f(x^{\prime
})dx^{\prime };  \label{lfcd} \\
&&\ _{\ x}\overset{\alpha }{\underline{\partial }}_{\ _{2}x}f(x):=\frac{1}{%
\Gamma (s-\alpha )}\int\limits_{x}^{\ _{2}x}(x^{\prime }-x)^{s-\alpha
-1}\left( -\frac{\partial }{\partial x^{\prime }}\right) ^{s}f(x^{\prime
})dx^{\prime }\ .  \notag
\end{eqnarray}%
The fractional absolute differential $\overset{\alpha }{d},$ corresponding
to above fractional derivatives, is written $\overset{\alpha }{d}%
:=(dx^{j})^{\alpha }\ \ _{\ 0}\overset{\alpha }{\underline{\partial }}_{j},$
where $\ \overset{\alpha }{d}x^{j}=(dx^{j})^{\alpha }\frac{(x^{j})^{1-\alpha
}}{\Gamma (2-\alpha )},$ where we consider $\ _{1}x^{i}=0.$

For a fractional tangent bundle $\overset{\alpha }{\underline{T}}M$ \ for $%
\alpha \in (0,1),$ associated to a manifold $M$ of necessary smooth class
and integer $\dim M=n,$ we write both the integer and fractional local
coordinates in the form $u^{\beta }=(x^{j},y^{a}).$ The symbol $T$ is
underlined in order to emphasize that we shall associate the approach to a
fractional Caputo derivative. \ A fractional frame basis $\overset{\alpha }{%
\underline{e}}_{\beta }=e_{\ \beta }^{\beta ^{\prime }}(u^{\beta })\overset{%
\alpha }{\underline{\partial }}_{\beta ^{\prime }}$ on $\overset{\alpha }{%
\underline{T}}M$ \ is connected via a vielbein transform $e_{\ \beta
}^{\beta ^{\prime }}(u^{\beta })$ with a fractional local coordinate basis
\begin{equation}
\overset{\alpha }{\underline{\partial }}_{\beta ^{\prime }}=\left( \overset{%
\alpha }{\underline{\partial }}_{j^{\prime }}=_{\ _{1}x^{j^{\prime }}}%
\overset{\alpha }{\underline{\partial }}_{j^{\prime }},\overset{\alpha }{%
\underline{\partial }}_{b^{\prime }}=_{\ _{1}y^{b^{\prime }}}\overset{\alpha
}{\underline{\partial }}_{b^{\prime }}\right) ,  \label{frlcb}
\end{equation}%
for $j^{\prime }=1,2,...,n$ and $b^{\prime }=n+1,n+2,...,n+n.$ The
fractional co--bases are written $\overset{\alpha }{\underline{e}}^{\ \beta
}=e_{\beta ^{\prime }\ }^{\ \beta }(u^{\beta })\overset{\alpha }{d}u^{\beta
^{\prime }},$ where the fractional local coordinate co--basis is
\begin{equation}
\ _{\ }\overset{\alpha }{d}u^{\beta ^{\prime }}=\left( (dx^{i^{\prime
}})^{\alpha },(dy^{a^{\prime }})^{\alpha }\right) .  \label{frlccb}
\end{equation}

Explicit constructions in the geometry of fractional tangent bundle depend
on the type of chosen fractional derivative.

\subsection{A geometrization of fractional Lagrange mechanics}

A fractional Lagrange space $\overset{\alpha }{\underline{L^{n}}}=(\overset{%
\alpha }{\underline{M}},\overset{\alpha }{L})$ of fractional dimension $%
\alpha \in (0,1),$ for a regular real function $\overset{\alpha }{L}:$ $%
\overset{\alpha }{\underline{T}}M\rightarrow \mathbb{R},$ is associated to a
prime Lagrange space $L^{n}=(M,L),$ of integer dimension $n,$ which (in its
turn) is defined by a Lagrange fundamental function $L(x,y),$ i.e. a regular
real function $L:$ $TM\rightarrow \mathbb{R},$ for which the Hessian $%
_{L}g_{ij}=(1/2)\partial ^{2}L/\partial y^{i}\partial y^{j}$ is not
generated.

Any $\overset{\alpha }{L}(x,\ ^{\alpha }y)$ determines three fundamental
geometric objects on $\overset{\alpha }{\underline{L^{n}}}:$

\begin{enumerate}
\item A canonical N--connection $\ \ _{L}\overset{\alpha }{\mathbf{N}}%
\mathbf{=}\{\ _{L}^{\alpha }N_{i}^{a}\}$ structure \ (with local coefficients%
\footnote{%
computed with the aim to encode the fractional Euler--Lagrange equations
into a canonical semi--spray configuration \cite{bv1}, $\ $%
\begin{equation*}
\ _{L}^{\alpha }N_{j}^{a}=\ _{\ _{1}y^{j}}\overset{\alpha }{\underline{%
\partial }}_{j}\overset{\alpha }{G^{k}}(x,\ ^{\alpha }y)\mbox{\ for \ }%
\overset{\alpha }{G^{k}}=\frac{1}{4}\ \ _{L\ }\overset{\alpha }{g^{kj}}\left[
y^{j}\ _{\ _{1}y^{j}}\overset{\alpha }{\underline{\partial }}_{j}\ \left(
_{\ _{1}x^{i}}\overset{\alpha }{\underline{\partial }}_{i}\overset{\alpha }{L%
}\right) -\ _{\ _{1}x^{i}}\overset{\alpha }{\underline{\partial }}_{i}%
\overset{\alpha }{L}\right]
\end{equation*}%
} $\ _{L}^{\alpha }N_{i}^{a}$ parametrized for a decomposition $\ _{L}%
\overset{\alpha }{\mathbf{N}}\mathbf{=}\ \ _{L}^{\alpha
}N_{i}^{a}(u)(dx^{i})^{\alpha }\otimes \overset{\alpha }{\underline{\partial
}}_{a}$ with respect to local bases (\ref{frlcb}) and (\ref{frlccb})) with
an associated class of \ N--adapted fractional (co) frames linearly
depending on $\ ^{\alpha }N_{i}^{a},$
\begin{eqnarray}
\ _{L}^{\alpha }\mathbf{e}_{\beta } &=&\left[ \ _{L}^{\alpha }\mathbf{e}_{j}=%
\overset{\alpha }{\underline{\partial }}_{j}-\ _{L}^{\alpha }N_{j}^{a}%
\overset{\alpha }{\underline{\partial }}_{a},\ ^{\alpha }e_{b}=\overset{%
\alpha }{\underline{\partial }}_{b}\right] ,  \label{dder} \\
\ _{L}^{\alpha }\mathbf{e}^{\beta } &=&[\ ^{\alpha }e^{j}=(dx^{j})^{\alpha
},\ _{L}^{\alpha }\mathbf{e}^{b}=(dy^{b})^{\alpha }+\ _{L}^{\alpha
}N_{k}^{b}(dx^{k})^{\alpha }].  \label{ddif}
\end{eqnarray}

\item A canonical (Sasaki type) metric structure,
\begin{eqnarray}
\ \ _{L}\overset{\alpha }{\mathbf{g}} &=&\ _{L}^{\alpha }g_{kj}(x,y)\
^{\alpha }e^{k}\otimes \ ^{\alpha }e^{j}+\ _{L}^{\alpha }g_{cb}(x,y)\
_{L}^{\alpha }\mathbf{e}^{c}\otimes \ _{L}^{\alpha }\mathbf{e}^{b},
\label{sasm} \\
\ _{L\ }\overset{\alpha }{g}_{ij} &=&\frac{1}{4}\left( \overset{\alpha }{%
\underline{\partial }}_{i}\overset{\alpha }{\underline{\partial }}_{j}+%
\overset{\alpha }{\underline{\partial }}_{j}\overset{\alpha }{\underline{%
\partial }}_{i}\right) \overset{\alpha }{L}\neq 0,  \notag
\end{eqnarray}%
with $\ _{L}^{\alpha }g_{cb}$ computed respectively by the same formulas as $%
\ _{L}^{\alpha }g_{kj}.$

\item A canonical fractional metrical d--connec\-ti\-on $\ _{c}^{\alpha }%
\mathbf{D=}(h\ _{c}^{\alpha }D,v\ _{c}^{\alpha }D)=\{\ ^{\alpha }\widehat{%
\mathbf{\Gamma }}_{\ \alpha \beta }^{\gamma }=(\ ^{\alpha }\widehat{L}_{\
jk}^{i},\ ^{\alpha }\widehat{C}_{jc}^{i})\}\mathbf{,}$\textbf{\ }where%
\begin{eqnarray}
\ ^{\alpha }\widehat{L}_{jk}^{i} &=&\frac{1}{2}\ _{L}^{\alpha }g^{ir}\left(
\ _{L}^{\alpha }\mathbf{e}_{k}\ _{L}^{\alpha }g_{jr}+\ _{L}^{\alpha }\mathbf{%
e}_{j}\ _{L}^{\alpha }g_{kr}-\ _{L}^{\alpha }\mathbf{e}_{r}\ _{L}^{\alpha
}g_{jk}\right) ,  \label{cdc} \\
\ \ ^{\alpha }\widehat{C}_{bc}^{a} &=&\frac{1}{2}\ _{L}^{\alpha
}g^{ad}\left( \ ^{\alpha }e_{c}\ _{L}^{\alpha }g_{bd}+\ ^{\alpha }e_{c}\
_{L}^{\alpha }g_{cd}-\ ^{\alpha }e_{d}\ _{L}^{\alpha }g_{bc}\right)  \notag
\end{eqnarray}%
for $\ _{L}^{\alpha }g^{ad}$ being inverse to $\ _{L}^{\alpha }g_{kj}.$
\end{enumerate}

We conclude that the regular fractional mechanics defined by a fractional
Lagrangian $\overset{\alpha }{L}$ can be equivalently encoded \ into
canonical geometric data $\left( \ \ _{L}\overset{\alpha }{\mathbf{N}},\ \
_{L}\overset{\alpha }{\mathbf{g}},\ _{c}^{\alpha }\mathbf{D}\right) .$ This
allows us to apply a number of powerful geometric methods in \ fractional
calculus and applications.

\subsection{An almost K\"{a}hler--Lagrange model of fractional mechanics}

A fractional nonholonomic almost complex structure can be defined as a
linear operator $\overset{\alpha }{\mathbf{J}}$ acting on the vectors on $%
\overset{\alpha }{\underline{T}}M$ following formulas
\begin{equation*}
\overset{\alpha }{\mathbf{J}}(\ _{L}^{\alpha }\mathbf{e}_{i})=-\ ^{\alpha
}e_{i}\mbox{\ and \ }\overset{\alpha }{\mathbf{J}}(\ ^{\alpha }e_{i})=\
_{L}^{\alpha }\mathbf{e}_{i},
\end{equation*}%
where the superposition $\overset{\alpha }{\mathbf{J}}\mathbf{\circ \overset{%
\alpha }{\mathbf{J}}=-I,}$ for $\mathbf{I}$ being the unity matrix. $\ $This
structure is determined by and adapted to N--connection $\ \ _{L}\overset{%
\alpha }{\mathbf{N}}$ induced, in its turn, by \ a regular \ fractional $%
\overset{\alpha }{L}.$

A fractional Lagrangian $\overset{\alpha }{L}$ induces a canonical 1--form
\begin{equation*}
\ \ _{L}^{\alpha }\omega =\frac{1}{2}\left( \ _{\ _{1}y^{i}}\overset{\alpha }%
{\underline{\partial }}_{i}\overset{\alpha }{L}\right) \ ^{\alpha }e^{i}.
\end{equation*}
Following formula $\ _{L}^{\alpha }\mathbf{\theta (X,Y)}\doteqdot \ _{L}
\overset{\alpha }{\mathbf{g}}\left( \overset{\alpha }{\mathbf{J}}\mathbf{X,Y}%
\right) ,$ for any vectors $\mathbf{X}$ and $\mathbf{Y}$ on $\overset{\alpha
}{\underline{T}}M,$ any metric $\ _{L}\overset{\alpha }{\mathbf{g}}$ (\ref%
{sasm}) determines a canonical 2--form
\begin{equation}
\ \ \ _{L}^{\alpha }\mathbf{\theta }=\ _{L\ }\overset{\alpha }{g}_{ij}(x,\
^{\alpha }y)\ \ _{L}^{\alpha }\mathbf{e}^{i}\wedge \ ^{\alpha }e^{j}.
\label{asstr}
\end{equation}

The Main Result in \cite{bv1} (see similar ''integer'' details in \cite%
{ma1,vrflg,vfedq1}) \ states that the fractional canonical metrical
d--connec\-ti\-on $\ _{c}^{\alpha }\mathbf{D}$\textbf{\ }with N--adapted
coefficients (\ref{cdc}), defines a (unique) canonical fractional almost K%
\"{a}hler d--connection $\ _{c}^{\theta }\overset{\alpha }{\mathbf{D}}=$ $\
_{c}^{\alpha }\mathbf{D}$ satisfying the conditions $\ ^{\theta }\overset{%
\alpha }{\mathbf{D}}_{\mathbf{X}}\ \ _{L}\overset{\alpha }{\mathbf{g}}%
\mathbf{=0}$ and $\mathbf{\ }\ \ ^{\theta }\overset{\alpha }{\mathbf{D}}_{%
\mathbf{X}}\overset{\alpha }{\mathbf{J}}=0,$ for any vector $\mathbf{X}%
=X^{i}\ \ _{L}^{\alpha }\mathbf{e}_{i}+X^{a}\ ^{\alpha }e_{a}.$

The Nijenhuis tensor $\overset{\alpha }{\mathbf{\Omega }}$ for $\overset{%
\alpha }{\mathbf{J}}$ is defined in the form%
\begin{equation*}
\overset{\alpha }{\mathbf{\Omega }}(\mathbf{X},\mathbf{Y})\doteqdot \left[
\overset{\alpha }{\mathbf{J}}\mathbf{X,\overset{\alpha }{\mathbf{J}}Y}\right]
-\overset{\alpha }{\mathbf{J}}\left[ \overset{\alpha }{\mathbf{J}}\mathbf{X,Y%
}\right] -\overset{\alpha }{\mathbf{J}}\left[ \mathbf{X,Y}\right] -\left[
\mathbf{X,Y}\right] .
\end{equation*}%
A component calculus with respect to N--adapted bases (\ref{dder}) and (\ref%
{ddif}), for $\ \overset{\alpha }{\mathbf{\Omega }}(\mathbf{e}_{\alpha },%
\mathbf{e}_{\beta })=\overset{\alpha }{\mathbf{\Omega }}_{\alpha \beta
}^{\gamma }\mathbf{e}_{\gamma }$ results in $\overset{\alpha }{\mathbf{%
\Omega }}_{\alpha \beta }^{\gamma }=4\overset{\alpha }{\mathbf{T}}_{\alpha
\beta }^{\gamma },$ where $\overset{\alpha }{\mathbf{T}}_{\alpha \beta
}^{\gamma }$ is the torsion of an affine fractional connection $\overset{%
\alpha }{\mathbf{\Gamma }}_{\alpha \beta }^{\gamma }.$\footnote{%
This formula is a nonholonomic analog, for our conventions, with inverse
sign, of the formula (2.9) from \cite{karabeg1}.} For $\ _{c}^{\alpha }%
\mathbf{D=}\{\ ^{\alpha }\widehat{\mathbf{\Gamma }}_{\ \alpha \beta
}^{\gamma }\}\mathbf{,}$ the components of torsion $\ \ _{L}^{\alpha }%
\widehat{\mathbf{T}}_{\ \alpha \beta }^{\gamma }$ are $\ _{L}^{\alpha }%
\widehat{T}_{jk}^{i}=0,\ _{L}^{\alpha }\widehat{T}_{bc}^{a}=0,\ _{L}^{\alpha
}\widehat{T}_{jk}^{i}=\ _{L}^{\alpha }\widehat{C}_{\ jc}^{i},\ _{L}^{\alpha }%
\widehat{T}_{ij}^{a}=\ _{L}^{\alpha }\Omega _{ij}^{a}, \ _{L}^{\alpha }%
\widehat{T}_{ib}^{a}=\ \ ^{\alpha }e_{b}\ _{L}^{\alpha }N_{i}^{a}-\
_{L}^{\alpha }\widehat{L}_{\ bi}^{a}.$

So, we constructed a canonical (i.e. uniquely determined by $\overset{\alpha
}{L})$ almost K\"{a}hler distinguished connection (d--connection) $\
^{\theta }\overset{\alpha }{\mathbf{D}}$ \ being compatible both with the
almost K\"{a}hler, $\left( \ _{L}^{\alpha }\mathbf{\theta ,}\overset{\alpha }%
{\mathbf{J}}\right) ,$ and N--connection, $\ \ _{L}^{\alpha }\mathbf{N,}$
structures. We can work equivalently with the data $\overset{\alpha }{%
\underline{L^{n}}}=(\overset{\alpha }{\underline{M}},\overset{\alpha }{L}%
)=\left( \ \ _{L}\overset{\alpha }{\mathbf{N}},\ \ _{L}\overset{\alpha }{%
\mathbf{g}},\ _{c}^{\alpha }\mathbf{D}\right) $ and/or $\overset{\alpha }{%
\underline{K^{2n}}}=\left( \overset{\alpha }{\mathbf{J}},\ _{L}^{\alpha }%
\mathbf{\theta ,}\ _{c}^{\alpha }\mathbf{D}\right) .$ The last (nonholonomic
almost symplectic) ones are most convenient for deformation quantization.

\section{Fractional Deformations and Quantization}

\label{s3}In this section we provide a nonholonomic fractional modification
of Fedosov's construction which will be applied for deformation quantization
of fractional Lagrange mechanics, see next section.

\subsection{Star products for fractional symplectic models}

For integer dimensions, any $\ \ \ _{L}^{\alpha }\mathbf{\theta }$ (\ref%
{asstr}) \ induces a structure of Poisson brackets $\{\cdot ,\cdot \}$ via
the Hamilton--Jacobi equations associated to a regular Lagrangian $L,$ see
details in Corollary 2.1 from Ref. \cite{anav}. Working with local
fractional Caputo (co) bases (\ref{frlcb}) and (\ref{frlccb}), the Poisson
structure and derived geometric constructions with data $\overset{\alpha }{%
\underline{K^{2n}}}$ are very similar to those for an abstract,
non--singular, Poisson manifold $(V,\{\cdot ,\cdot \}).$ We shall use the
symbol $V$ for a general space (it can be holonomic, or nonholonomic,
fractional and/or integer etc) in order outline some important concepts
which, for our purposes, will be latter developed for more rich geometric
structures on $V=$ $\overset{\alpha }{\underline{K^{2n}}}.$

Let us denote by $C^{\infty }(V)[[v]]$ the spaces of formal series in
variable $v$ with coefficients from $C^{\infty }(V)$ on a Poisson manifold $%
(V,\{\cdot ,\cdot \}).$ A deformation quantization is an associative algebra
structure on $C^{\infty }(V)[[v]]$ with a $v$--linear and $v$--adically
continuous star product
\begin{equation}
\ ^{1}f\ast \ ^{2}f=\sum\limits_{r=0}^{\infty }\ _{r}C(\ ^{1}f,\ ^{2}f)\
v^{r},  \label{stpr}
\end{equation}%
where $\ _{r}C,r\geq 0,$ are bilinear operators on $C^{\infty }(V)$ with $\
_{0}C(\ ^{1}f,\ ^{2}f)=\ ^{1}f\ ^{2}f$ and $\ _{1}C(\ ^{1}f,\ ^{2}f)-\
_{1}C(\ ^{2}f,\ ^{1}f)=i\{\ ^{1}f,\ ^{2}f\},$ with $i$ being the complex
unity.

If all operators $\ _{r}C,r\geq 0$ are bidifferential, a corresponding star
product $\ast $ is called differential. We can define different star
products on a $(V,\{\cdot ,\cdot \}).$ Two differential star products $\ast $
and $\ast ^{\prime }$ are equivalent if there is an isomorphism of algebras $%
A:\left( C^{\infty }(V)[[v]],\ast \right) \rightarrow \left( C^{\infty
}(V)[[v]],\ast ^{\prime }\right) ,$ where $A=\sum\limits_{r\geq 1}^{\infty
}\ _{r}A\ v^{r},$ for $_{0}A$ being the identity operator and $\ _{r}A$
being differential operators on $C^{\infty }(V).$

\subsection{Fedosov operators for fractional Lagrange spaces}

On $\overset{\alpha }{\underline{K^{2n}}},$ we introduce the tensor $\ \ \
_{L}^{\alpha }\mathbf{\Lambda }^{\beta \gamma }\doteqdot \ \ \ _{L}^{\alpha
}\theta ^{\beta \gamma }-i\ \ \ _{L}^{\alpha }\mathbf{g}^{\beta \gamma },$
where $i$ is the \ complex unity. The local coordinates on $\overset{\alpha }%
{\underline{T}}M$ are parametrized in the form $u=\{u^{\alpha }\}$ and the
local coordinates on $\underline{T}_{u}\overset{\alpha }{\underline{T}}M$ \
are labelled $(u,z)=(u^{\gamma },z^{\beta }),$ where $z^{\beta }$ are the
second order fiber coordinates (we should state additionally a left label $%
\alpha $ if the fractional character of some coordinates has to be
emphasized, for instance to write $\ ^{\alpha }z^{\beta }$ instead of $%
z^{\beta }).$ In deformation quantization, there are used formal series
\begin{equation}
a(v,z)=\sum\limits_{r\geq 0,|\overbrace{\beta }|\geq 0}\ a_{r,\overbrace{%
\beta }}(u)z^{\overbrace{\beta }}\ v^{r},  \label{formser}
\end{equation}%
where $\overbrace{\beta }$ is a multi--index, defining the formal Wick
algebra $\overset{\alpha }{\mathbf{W}}_{u}.$ The formal Wick product $%
\overset{\alpha }{\circ }$ of two elements $a$ and $b$ defined by some
formal series (\ref{formser}) is
\begin{equation}
a\overset{\alpha }{\circ }b\ (z)\doteqdot \exp \left( i\frac{v}{2}\ \ \
_{L}^{\alpha }\mathbf{\Lambda }^{\alpha \beta }\frac{\partial ^{2}}{\partial
z^{\alpha }\partial z_{[1]}^{\alpha }}\right) a(z)b(z_{[1]})\mid
_{z=z_{[1]}}.  \label{fpr}
\end{equation}%
Such a product is determined by a regular fractional Lagrangian $\overset{%
\alpha }{L}$ and corresponding $\overset{\alpha }{\underline{K^{2n}}}.$

Following the constructions from Refs. \cite{vfedq1,vfedq2} for such
''d--algebras'', we construct a nonholonomic bundle $\overset{\alpha }{%
\mathbf{W}}=\cup _{u}$ $\overset{\alpha }{\mathbf{W}}_{u}$ of formal Wick
algebras defined as a union of $\mathbf{W}_{u}.$ The fibre product (\ref{fpr}%
) is trivially extended to the space of $\overset{\alpha }{\mathbf{W}}$%
--valued N--adapted differential forms $\ _{L}^{\alpha }\mathcal{W}\otimes
\overset{\alpha }{\Lambda }$ by means of the usual exterior product of the
scalar forms $\ \ _{L}^{\alpha }\mathbf{\Lambda },$ where $\ _{L}^{\alpha }%
\mathcal{W}$ denotes the sheaf of smooth sections of $\ \overset{\alpha }{%
\mathbf{W}}.$ There is a standard grading on $\ _{L}^{\alpha }\mathbf{%
\Lambda ,}$ denoted $\deg _{a}.$ It is possible to introduce grading $\deg
_{v},\deg _{s},\deg _{a}$ on $\ \ _{L}^{\alpha }\mathcal{W}\otimes \overset{%
\alpha }{\Lambda }$ defined on homogeneous elements $v,z^{\beta },\ \
_{L}^{\alpha }\mathbf{e}^{\beta }$ as follows: $\deg _{v}(v)=1,$ $\deg
_{s}(z^{\alpha })=1,$ $\deg _{a}(\ \ _{L}^{\alpha }\mathbf{e}^{\alpha })=1,$
and all other gradings of the elements $v,z^{\alpha },\ \ _{L}^{\alpha }%
\mathbf{e}^{\alpha }$ are set to zero (we adapt to nonholonomic fractional
configuration the conventions from \cite{karabeg1,vfedq1,vfedq2}). We extend
the canonical d--connection $\ _{c}^{\alpha }\mathbf{D=}\{\ ^{\alpha }%
\widehat{\mathbf{\Gamma }}_{\ \alpha \beta }^{\gamma }\}$ (\ref{cdc})$\ $ to
an operator on $\ _{L}^{\alpha }\mathcal{W}\otimes \overset{\alpha }{\Lambda
}$ following the formula
\begin{equation*}
\ ^{\alpha }\mathbf{\check{D}}\left( a\otimes \lambda \right) \doteqdot
\left( \ \ _{L}^{\alpha }\mathbf{e}_{\alpha }(a)-u^{\beta }\ ^{\alpha }%
\widehat{\mathbf{\Gamma }}_{\ \alpha \beta }^{\gamma }\mathbf{\ }^{z}\mathbf{%
e}_{\alpha }(a)\right) \otimes (\ _{L}^{\alpha }\mathbf{e}^{\alpha }\wedge
\lambda )+a\otimes d\lambda ,
\end{equation*}%
where $^{z}\mathbf{e}_{\alpha }$ is a similar to $\ \ _{L}^{\alpha }\mathbf{e%
}_{\alpha }$ on N--anholonomic fibers of $\underline{T}\overset{\alpha }{%
\underline{T}}M,$ depending on $z$--variables (for holonomic second order
fibers, we can take $^{z}\mathbf{e}_{\alpha }=\partial /\partial z^{\alpha
}).$

\begin{definition}
$\ $The Fedosov distinguished operators (d--operators) $\ \ _{L}^{\alpha
}\delta $ and $\ \ _{L}^{\alpha }\delta ^{-1}$ on $\ \ _{L}^{\alpha }%
\mathcal{W}\otimes \overset{\alpha }{\Lambda }$ are
\begin{eqnarray}
\ \ \ _{L}^{\alpha }\delta (a) &=&\ \ _{L}^{\alpha }\mathbf{e}^{\alpha
}\wedge \mathbf{\ }^{z}\mathbf{e}_{\alpha }(a),  \label{fedop} \\
\ \ \ _{L}^{\alpha }\delta ^{-1}(a) &=&\left\{
\begin{array}{c}
\frac{i}{p+q}z^{\alpha }\ \ _{L}^{\alpha }\mathbf{e}_{\alpha }(a),\mbox{ if }%
p+q>0, \\
{\qquad 0},\mbox{ if }p=q=0,%
\end{array}%
\right.  \notag
\end{eqnarray}%
where $a\in \ _{L}^{\alpha }\mathcal{W}\otimes \overset{\alpha }{\Lambda }$
is homogeneous w.r.t. the grading $\deg _{s}$ and $\deg _{a}$ with $\deg
_{s}(a)=p$ and $\deg _{a}(a)=q.$
\end{definition}

The d--operators (\ref{fedop}) satisfy the property that
\begin{equation*}
a=(\ \ _{L}^{\alpha }\delta \ \ _{L}^{\alpha }\delta ^{-1}+\ \ _{L}^{\alpha
}\delta ^{-1}\ \ _{L}^{\alpha }\delta +\sigma )(a),
\end{equation*}%
where $a\longmapsto \sigma (a)$ is the projection on the $(\deg _{s},\deg
_{a})$--bihomogeneous part of $a$ of degree zero, $\deg _{s}(a)=\deg
_{a}(a)=0.$ We can verify that $\ _{L}^{\alpha }\delta $ is also a $\deg
_{a} $--graded derivation of d--algebra $\left(\ _{L}^{\alpha }\mathcal{W}%
\otimes \overset{\alpha }{\Lambda }\mathbf{,}\overset{\alpha }{\circ }%
\right) .$

A fractional Lagrangian $\overset{\alpha }{L}$ induces respective torsion
and curvature
\begin{eqnarray*}
\widehat{\mathcal{T}} &\doteqdot &\frac{z^{\gamma }}{2}\ \ \ \ _{L}^{\alpha
}\theta _{\gamma \tau }\ _{L}^{\alpha }\widehat{\mathbf{T}}_{\alpha \beta
}^{\tau }(u)\ \ \ _{L}^{\alpha }\mathbf{e}^{\alpha }\wedge \ \ \
_{L}^{\alpha }\mathbf{e}^{\beta }, \\
\widehat{\mathcal{R}} &\doteqdot &\frac{z^{\gamma }z^{\varphi }}{4}\ \
_{L}^{\alpha }\theta _{\gamma \tau }\ _{L}^{\alpha }\widehat{\mathbf{R}}_{\
\varphi \alpha \beta }^{\tau }(u)\ \ \ _{L}^{\alpha }\mathbf{e}^{\alpha
}\wedge \ \ \ _{L}^{\alpha }\mathbf{e}^{\beta },
\end{eqnarray*}%
on $\ _{L}^{\alpha }\mathcal{W}\otimes \overset{\alpha }{\Lambda },$ for $\
_{L}^{\alpha }\widehat{\mathbf{T}}_{\ \alpha \beta }^{\gamma }$ and $\
_{L}^{\alpha }\widehat{\mathbf{R}}_{\ \varphi \alpha \beta }^{\tau }$ being
respectively the torsion and curvature of the canonical d--connection $\
_{c}^{\alpha }\mathbf{D}$ (\ref{cdc}).

Using the formulas (\ref{formser}) and (\ref{fpr}) and the identity
\begin{equation}
\ \ _{L}^{\alpha }\theta _{\varphi \tau }\ _{L}^{\alpha }\widehat{\mathbf{R}}%
_{\ \gamma \alpha \beta }^{\tau }=\ \ _{L}^{\alpha }\theta _{\gamma \tau }\
_{L}^{\alpha }\widehat{\mathbf{R}}_{\ \varphi \alpha \beta }^{\tau },
\label{acp}
\end{equation}%
we prove the important formulas:

\begin{proposition}
The fractional Fedosov d--operators satisfy the properties
\begin{equation}
\left[ \ ^{\alpha }\mathbf{\check{D}},\ \ \ _{L}^{\alpha }\delta \right] =%
\frac{i}{v}ad_{Wick}(\widehat{\mathcal{T}})\mbox{ and }\ \ ^{\alpha }\mathbf{%
\check{D}}^{2}=-\frac{i}{v}ad_{Wick}(\widehat{\mathcal{R}}),  \label{comf}
\end{equation}%
where $[\cdot ,\cdot ]$ is the $\deg _{a}$--graded commutator of
endomorphisms of $\ \ \ _{L}^{\alpha }\mathcal{W}\otimes \overset{\alpha }{%
\Lambda }$ and $ad_{Wick}$ is defined via the $\deg _{a}$--graded commutator
in $\left( \ \ _{L}^{\alpha }\mathcal{W}\otimes \overset{\alpha }{\Lambda }%
\mathbf{,}\overset{\alpha }{\circ }\right) .$
\end{proposition}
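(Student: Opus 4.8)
The plan is to verify both identities in (\ref{comf}) by direct computation in the N--adapted fractional frames (\ref{dder}) and (\ref{ddif}), exploiting the fact that $\ ^{\alpha }\mathbf{\check{D}}$ and $\ _{L}^{\alpha }\delta $ are both $\deg _{a}$--graded derivations of degree $+1$ of the d--algebra $\left( \ _{L}^{\alpha }\mathcal{W}\otimes \overset{\alpha }{\Lambda },\overset{\alpha }{\circ }\right) $. Consequently their graded commutator is the anticommutator $\ ^{\alpha }\mathbf{\check{D}}\ _{L}^{\alpha }\delta +\ _{L}^{\alpha }\delta \ ^{\alpha }\mathbf{\check{D}}$, and $\ ^{\alpha }\mathbf{\check{D}}^{2}=\frac{1}{2}[\ ^{\alpha }\mathbf{\check{D}},\ ^{\alpha }\mathbf{\check{D}}]$. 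Because each operator is a derivation, it suffices to evaluate both sides on the generators $v,z^{\alpha },\ _{L}^{\alpha }\mathbf{e}^{\alpha }$ and on coefficient functions $a_{r,\overbrace{\beta }}(u)$; the general case then follows from the Leibniz rule with respect to $\overset{\alpha }{\circ }$.

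For the first identity, I would apply $\ ^{\alpha }\mathbf{\check{D}}\ _{L}^{\alpha }\delta +\ _{L}^{\alpha }\delta \ ^{\alpha }\mathbf{\check{D}}$ to a homogeneous element. The pure frame--derivative parts of $\ ^{\alpha }\mathbf{\check{D}}$ (the terms $\ _{L}^{\alpha }\mathbf{e}_{\alpha }$) commute with the fibre operation $\ ^{z}\mathbf{e}_{\alpha }=\partial /\partial z^{\alpha }$ only up to the anholonomy coefficients $\ _{L}^{\alpha }W_{\alpha \beta }^{\gamma }$ of the N--adapted basis, defined by $[\ _{L}^{\alpha }\mathbf{e}_{\alpha },\ _{L}^{\alpha }\mathbf{e}_{\beta }]=\ _{L}^{\alpha }W_{\alpha \beta }^{\gamma }\ _{L}^{\alpha }\mathbf{e}_{\gamma }$. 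Collecting these together with the connection terms $u^{\beta }\ ^{\alpha }\widehat{\mathbf{\Gamma }}_{\ \alpha \beta }^{\gamma }$, the antisymmetric combination that survives is exactly the torsion $\ _{L}^{\alpha }\widehat{\mathbf{T}}_{\ \alpha \beta }^{\gamma }$ of $\ _{c}^{\alpha }\mathbf{D}$, with the components listed after (\ref{ddif}). Since $\widehat{\mathcal{T}}$ is linear in $z$ (it has $\deg _{s}=1$), the operator $ad_{Wick}(\widehat{\mathcal{T}})$ acts, through the exponential in (\ref{fpr}), as a single $\partial /\partial z$--contraction against $\ _{L}^{\alpha }\mathbf{\Lambda }^{\alpha \beta }$; the prefactor $i/v$ then cancels the $iv/2$ coming from (\ref{fpr}) up to the combinatorial factor absorbed into the definition of $\widehat{\mathcal{T}}$, reproducing the left--hand side.

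For the second identity, applying $\ ^{\alpha }\mathbf{\check{D}}$ twice, the exterior part yields $d^{2}=0$, while the connection parts produce the curvature of $\ _{c}^{\alpha }\mathbf{D}$. The second--order $z$--derivative terms recombine into $\ _{L}^{\alpha }\widehat{\mathbf{R}}_{\ \varphi \alpha \beta }^{\tau }$, weighted by $z^{\gamma }z^{\varphi }$, which is precisely the quadratic--in--$z$ element $\widehat{\mathcal{R}}$. Here the symmetry identity (\ref{acp}), namely $\ _{L}^{\alpha }\theta _{\varphi \tau }\ _{L}^{\alpha }\widehat{\mathbf{R}}_{\ \gamma \alpha \beta }^{\tau }=\ _{L}^{\alpha }\theta _{\gamma \tau }\ _{L}^{\alpha }\widehat{\mathbf{R}}_{\ \varphi \alpha \beta }^{\tau }$, is essential: it guarantees that $\widehat{\mathcal{R}}$ is well defined and that the $ad_{Wick}$--action of this $\deg _{s}=2$ element reduces (again via a single $\Lambda $--contraction in (\ref{fpr})) to the first--order $z$--operator representing $\ ^{\alpha }\mathbf{\check{D}}^{2}$, with the sign $-i/v$ fixed by the normalization of $\widehat{\mathcal{R}}$.

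The main obstacle I expect is controlling the metric (symmetric) part of $\ _{L}^{\alpha }\mathbf{\Lambda }^{\beta \gamma }=\ _{L}^{\alpha }\theta ^{\beta \gamma }-i\ _{L}^{\alpha }\mathbf{g}^{\beta \gamma }$: a priori the term $-i\ _{L}^{\alpha }\mathbf{g}^{\beta \gamma }$ could contribute spurious pieces to both $ad_{Wick}(\widehat{\mathcal{T}})$ and $ad_{Wick}(\widehat{\mathcal{R}})$. One must check that these symmetric contributions either cancel in the graded commutator or are annihilated upon contraction with the antisymmetric torsion and curvature coefficients; for the curvature term this is exactly what (\ref{acp}) secures, while for the torsion term it follows from the antisymmetry of $\ _{L}^{\alpha }\widehat{\mathbf{T}}_{\ \alpha \beta }^{\gamma }$ in $\alpha \beta $ together with the explicit component structure compatible with $\ _{c}^{\alpha }\mathbf{D}$. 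A secondary point of care, specific to the fractional setting, is that all frame derivatives are the Caputo operators of (\ref{lfcd}), so the anholonomy coefficients and the Leibniz rule must be used in their fractional N--adapted form; since the whole construction is purely algebraic in $z$ and $v$, however, the fractional character enters only through the coefficients and does not alter the formal structure of the computation, so that the argument parallels the holonomic Kähler case of \cite{karabeg1} and its nonholonomic extensions \cite{vfedq1,vfedq2}.
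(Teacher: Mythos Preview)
Your proposal is correct and follows the same route the paper indicates: the paper's entire argument for this proposition is the single sentence ``Using the formulas (\ref{formser}) and (\ref{fpr}) and the identity (\ref{acp}), we prove the important formulas,'' i.e.\ exactly the direct Wick--product computation with the curvature symmetry (\ref{acp}) that you outline in detail. Your discussion of why the symmetric $-i\,_{L}^{\alpha}\mathbf{g}^{\beta\gamma}$ piece of $\ _{L}^{\alpha}\mathbf{\Lambda}$ drops out of $ad_{Wick}$, and your remark that the fractional Caputo structure enters only through the coefficients, make explicit what the paper leaves implicit.
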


We have all formal geometric ingradients for performing deformation
quantization of fractional Lagrange mechanics.

\section{Fedosov Quantization of Fractional Mechanics}

\label{s4}The Fedosov's deformation quantization theory is generalized to
fractional Lagrange spaces. The class $c_{0}$ of the deformation
quantization of fractional Lagrange geometry is calculated.

\subsection{Main theorems for fractional Lagrange spaces}

We denote the $Deg$--homogeneous component of degree $k$ of an element $a\in
\ \ _{L}^{\alpha }\mathcal{W}\otimes \overset{\alpha }{\Lambda }$ by $%
a^{(k)}.$

\begin{theorem}
\label{th2}For any regular fractional Lagrangian $\overset{\alpha }{L}$ and
correspoinding canonical almost K\"{a}hler--Lagrange model $\overset{\alpha }%
{\underline{K^{2n}}}=\left( \overset{\alpha }{\mathbf{J}},\ _{L}^{\alpha }%
\mathbf{\theta ,}\ _{c}^{\alpha }\mathbf{D}\right) ,$ there is a flat
canonical fractional Fedosov d--connection
\begin{equation*}
\ \ \ _{L}^{\alpha }\widehat{\mathcal{D}}\doteqdot -\ \ \ _{L}^{\alpha
}\delta +\ ^{\alpha }\mathbf{\check{D}}-\frac{i}{v}ad_{Wick}(r)
\end{equation*}%
satisfying the condition $\ \ \ _{L}^{\alpha }\widehat{\mathcal{D}}^{2}=0,$
where the unique element $r\in $ $\ \ _{L}^{\alpha }\mathcal{W}\otimes
\overset{\alpha }{\Lambda }\mathbf{,}$ $\deg _{a}(r)=1,$ $\ \ ^{\alpha }%
\mathbf{\check{D}}\ \ _{L}^{\alpha }\delta ^{-1}r=0,$ solves the equation
\begin{equation*}
\ \ \ _{L}^{\alpha }\delta r=\widehat{\mathcal{T}}+\widehat{\mathcal{R}}+\
^{\alpha }\mathbf{\check{D}}r-\frac{i}{v}r\overset{\alpha }{\circ }r
\end{equation*}%
and this element can be computed recursively with respect to the total
degree $Deg$ as follows:%
\begin{eqnarray*}
r^{(0)} &=&r^{(1)}=0,\ r^{(2)}=\ \ \ _{L}^{\alpha }\delta ^{-1}\widehat{%
\mathcal{T}}, \\
r^{(3)} &=&\ \ \ _{L}^{\alpha }\delta ^{-1}\left( \widehat{\mathcal{R}}+\
^{\alpha }\mathbf{\check{D}}r^{(2)}-\frac{i}{v}r^{(2)}\overset{\alpha }{%
\circ }r^{(2)}\right) , \\
r^{(k+3)} &=&\ \ \ _{L}^{\alpha }\delta ^{-1}\left( \ ^{\alpha }\mathbf{%
\check{D}}r^{(k+2)}-\frac{i}{v}\sum\limits_{l=0}^{k}r^{(l+2)}\overset{\alpha
}{\circ }r^{(l+2)}\right) ,k\geq 1.
\end{eqnarray*}
\end{theorem}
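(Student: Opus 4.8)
The plan is to follow Fedosov's iterative construction, transported to the nonholonomic fractional d--algebra $\left(\ _{L}^{\alpha}\mathcal{W}\otimes\overset{\alpha}{\Lambda},\overset{\alpha}{\circ}\right)$, and to pin down $r$ by the single requirement that the curvature of $\ _{L}^{\alpha}\widehat{\mathcal{D}}$ vanish. First I would compute $\ _{L}^{\alpha}\widehat{\mathcal{D}}^{2}$ by expanding the graded square of $-\ _{L}^{\alpha}\delta+\ ^{\alpha}\mathbf{\check{D}}-\frac{i}{v}ad_{Wick}(r)$. Since each of $\ _{L}^{\alpha}\delta$, $\ ^{\alpha}\mathbf{\check{D}}$ and $ad_{Wick}(r)$ is a $\deg_{a}$--graded derivation of degree one, the square reduces to the sum of their individual squares and their pairwise graded commutators. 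Feeding in the Proposition, namely $\left[\ ^{\alpha}\mathbf{\check{D}},\ _{L}^{\alpha}\delta\right]=\frac{i}{v}ad_{Wick}(\widehat{\mathcal{T}})$ and $\ ^{\alpha}\mathbf{\check{D}}^{2}=-\frac{i}{v}ad_{Wick}(\widehat{\mathcal{R}})$, together with $\left(\ _{L}^{\alpha}\delta\right)^{2}=0$, the fact that $ad_{Wick}$ is a homomorphism of the Wick--commutator Lie algebra into graded derivations (so that $\left[\ _{L}^{\alpha}\delta,ad_{Wick}(r)\right]=ad_{Wick}(\ _{L}^{\alpha}\delta r)$, and likewise for $\ ^{\alpha}\mathbf{\check{D}}$), and $ad_{Wick}(r)^{2}=ad_{Wick}(r\overset{\alpha}{\circ}r)$, all the cross terms collect into a single inner derivation,
\[
\ _{L}^{\alpha}\widehat{\mathcal{D}}^{2}=\frac{i}{v}ad_{Wick}\!\left(\ _{L}^{\alpha}\delta r-\widehat{\mathcal{T}}-\widehat{\mathcal{R}}-\ ^{\alpha}\mathbf{\check{D}}r+\frac{i}{v}r\overset{\alpha}{\circ}r\right).
\]

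The flatness condition $\ _{L}^{\alpha}\widehat{\mathcal{D}}^{2}=0$ is then secured by forcing the argument of $ad_{Wick}$ to vanish, which is exactly the stated equation $\ _{L}^{\alpha}\delta r=\widehat{\mathcal{T}}+\widehat{\mathcal{R}}+\ ^{\alpha}\mathbf{\check{D}}r-\frac{i}{v}r\overset{\alpha}{\circ}r$. (Here I make the canonical, trivial--central--term choice; adding a central scalar two--form would change the characteristic class $c_{0}$ but not the flatness.) To solve this equation I would introduce the total degree $Deg=2\deg_{v}+\deg_{s}$, with respect to which $\widehat{\mathcal{T}}$ is homogeneous of degree one and $\widehat{\mathcal{R}}$ of degree two, as they carry one and two factors of $z$ respectively, while $\ _{L}^{\alpha}\delta^{-1}$ raises $Deg$ by one, $\ ^{\alpha}\mathbf{\check{D}}$ and $\overset{\alpha}{\circ}$ are $Deg$--preserving, and multiplication by $\frac{i}{v}$ lowers it by two.

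Applying $\ _{L}^{\alpha}\delta^{-1}$ and invoking the Hodge--type decomposition $a=(\ _{L}^{\alpha}\delta\,\ _{L}^{\alpha}\delta^{-1}+\ _{L}^{\alpha}\delta^{-1}\,\ _{L}^{\alpha}\delta+\sigma)(a)$ stated before the Definition, with the normalizations $\sigma(r)=0$ (automatic, since $\deg_{a}(r)=1$) and $\ _{L}^{\alpha}\delta^{-1}r=0$ (which is what the recursion produces and which entails the theorem's gauge condition $\ ^{\alpha}\mathbf{\check{D}}\,\ _{L}^{\alpha}\delta^{-1}r=0$), I would convert the equation into the fixed--point form $r=\ _{L}^{\alpha}\delta^{-1}\left(\widehat{\mathcal{T}}+\widehat{\mathcal{R}}+\ ^{\alpha}\mathbf{\check{D}}r-\frac{i}{v}r\overset{\alpha}{\circ}r\right)$. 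Reading this identity off in each total degree gives $r^{(0)}=r^{(1)}=0$, then $r^{(2)}=\ _{L}^{\alpha}\delta^{-1}\widehat{\mathcal{T}}$, and thereafter the displayed recursion for $r^{(3)}$ and $r^{(k+3)}$, in which every right--hand side involves only strictly lower $Deg$--components already constructed. Because $\ _{L}^{\alpha}\delta^{-1}$ strictly raises degree, each component is determined uniquely and the process terminates at every fixed order, so the series for $r$ exists, is unique under the imposed normalizations, and converges in the $Deg$--adic topology.

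The hard part will be the consistency step: verifying that the $r$ assembled from the $\ _{L}^{\alpha}\delta^{-1}$--recursion genuinely satisfies the original $\ _{L}^{\alpha}\delta$--equation, and not merely its $\ _{L}^{\alpha}\delta^{-1}$--projection. Writing $A\doteqdot\ _{L}^{\alpha}\delta r-\widehat{\mathcal{T}}-\widehat{\mathcal{R}}-\ ^{\alpha}\mathbf{\check{D}}r+\frac{i}{v}r\overset{\alpha}{\circ}r$, one shows $A=0$ by establishing that $A$ is $\ _{L}^{\alpha}\delta^{-1}$--closed and $\sigma$--trivial and then that $\ _{L}^{\alpha}\widehat{\mathcal{D}}A=0$, so that the decomposition forces $A$ to vanish order by order. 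This is where the nonholonomic fractional geometry really enters: the needed cancellations rest on the Bianchi--type identities for the torsion $\ _{L}^{\alpha}\widehat{\mathbf{T}}_{\ \alpha\beta}^{\gamma}$ and curvature $\ _{L}^{\alpha}\widehat{\mathbf{R}}_{\ \varphi\alpha\beta}^{\tau}$ of $\ _{c}^{\alpha}\mathbf{D}$, and on the symmetry identity (\ref{acp}), $\ _{L}^{\alpha}\theta_{\varphi\tau}\ _{L}^{\alpha}\widehat{\mathbf{R}}_{\ \gamma\alpha\beta}^{\tau}=\ _{L}^{\alpha}\theta_{\gamma\tau}\ _{L}^{\alpha}\widehat{\mathbf{R}}_{\ \varphi\alpha\beta}^{\tau}$, which make $\widehat{\mathcal{R}}$ symplectically compatible and guarantee that the residual $ad_{Wick}$--argument is central. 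Since $\ _{L}^{\alpha}\delta$ and $\ ^{\alpha}\mathbf{\check{D}}$ fail to commute precisely through $\widehat{\mathcal{T}}$ on account of the nonzero anholonomy of the Caputo N--adapted frames $\ _{L}^{\alpha}\mathbf{e}_{\beta}$, checking these identities in the fractional frame is the genuine technical content; the grading then closes the induction automatically.
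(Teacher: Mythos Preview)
Your proposal is correct and follows essentially the same route as the paper: the standard Fedosov induction transported to the fractional N--adapted d--algebra, with flatness reduced to the displayed equation for $r$, solved recursively via the Hodge--type decomposition, and consistency closed by Bianchi--type identities. The paper's own proof is only a short sketch referring to \cite{karabeg1}, but it singles out explicitly the two identities $\ _{L}^{\alpha}\delta\widehat{\mathcal{T}}=0$ and $\ _{L}^{\alpha}\delta\widehat{\mathcal{R}}=\ ^{\alpha}\mathbf{\check{D}}\widehat{\mathcal{T}}$ as the ones that drive the induction --- these are exactly the ``Bianchi--type identities'' you invoke in your consistency step, so your more detailed write-up and the paper's sketch agree in substance.
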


\begin{proof}
The proof is similar to the standard Fedosov constructions if we work with
the Caputo fractional derivative in N--adapted form, by induction using the
identities%
\begin{equation*}
\ \ \ _{L}^{\alpha }\delta \widehat{\mathcal{T}}=0\mbox{ and }\ \ \
_{L}^{\alpha }\delta \widehat{\mathcal{R}}=\ \ ^{\alpha }\mathbf{\check{D}}%
\widehat{\mathcal{T}}.
\end{equation*}%
For integer dimensions and holonomic configurations we get the results from
Ref. \cite{karabeg1} proved for arbitrary affine connections with torsion
and almost K\"{a}hler structures on $M.$ $\Box $
\end{proof}

\vskip3pt

The next theorem gives a rule how to define and compute the star product
(which is the main purpose of deformation quantization) induced by a regular
fractional Lagrangian.

\begin{theorem}
\label{th3}A star--product $\overset{\alpha }{\ast }$ on the canonical
almost K\"{a}hler model of fractional Lagrange space $\overset{\alpha }{%
\underline{K^{2n}}}=\left( \overset{\alpha }{\mathbf{J}},\ _{L}^{\alpha }%
\mathbf{\theta ,}\ _{c}^{\alpha }\mathbf{D}\right) $ is defined on $%
C^{\infty }(\overset{\alpha }{\underline{L^{n}}})[[v]]$ by formula
\begin{equation*}
\ ^{1}f\overset{\alpha }{\ast }\ ^{2}f\doteqdot \sigma (\tau (\ ^{1}f))%
\overset{\alpha }{\circ }\sigma (\tau (\ ^{2}f)),
\end{equation*}%
where the projection $\sigma :\ \ \ _{L}^{\alpha }\mathcal{W}_{\widehat{%
\mathcal{D}}}\rightarrow C^{\infty }(\overset{\alpha }{\underline{L^{n}}}%
)[[v]]$ onto the part of $\deg _{s}$--degree zero is a bijection and the
inverse map $\tau :C^{\infty }(\overset{\alpha }{\underline{L^{n}}}%
)[[v]]\rightarrow \ _{L}^{\alpha }\mathcal{W}_{\widehat{\mathcal{D}}}$ \
is calculated recursively w.r.t. the total degree $Deg,$%
\begin{eqnarray*}
\tau (f)^{(0)} &=&f\mbox{\ and, for }k\geq 0, \\
\tau (f)^{(k+1)} &=&\ \ \ _{L}^{\alpha }\delta ^{-1}\left( \ \ \ ^{\alpha }%
\mathbf{\check{D}}\tau (f)^{(k)}-\frac{i}{v}\sum%
\limits_{l=0}^{k}ad_{Wick}(r^{(l+2)})(\tau (f)^{(k-l)})\right) .
\end{eqnarray*}
\end{theorem}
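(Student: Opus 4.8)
The plan is to verify that the map $\tau$ defined by the recursion is well-defined, that its image lands in the flat sections $\ _{L}^{\alpha }\mathcal{W}_{\widehat{\mathcal{D}}}$, and that $\sigma$ restricted to these flat sections is a bijection; the star product is then automatically associative because it is transported from the pointwise Wick product $\overset{\alpha }{\circ }$ through an algebra isomorphism. Throughout, the strategy mirrors the standard Fedosov argument (as in \cite{karabeg1,vfedq1}), the only novelty being that all derivatives are Caputo fractional derivatives taken in N--adapted form; since the Fedosov formalism is purely algebraic in the variable $z$ and uses only the \emph{formal} properties of $\ ^{\alpha }\mathbf{\check{D}}$, $\ _{L}^{\alpha }\delta $ and $\overset{\alpha }{\circ }$, the fractional character does not obstruct any step.

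First I would establish that $\sigma :\ _{L}^{\alpha }\mathcal{W}_{\widehat{\mathcal{D}}}\rightarrow C^{\infty }(\overset{\alpha }{\underline{L^{n}}})[[v]]$ is a bijection. The key input is the flatness $\ _{L}^{\alpha }\widehat{\mathcal{D}}^{2}=0$ established in Theorem \ref{th2} together with the Hodge--type decomposition
\begin{equation*}
a=(\ _{L}^{\alpha }\delta \ _{L}^{\alpha }\delta ^{-1}+\ _{L}^{\alpha }\delta ^{-1}\ _{L}^{\alpha }\delta +\sigma )(a)
\end{equation*}
recorded after the Definition. Using that $\ _{L}^{\alpha }\widehat{\mathcal{D}}=-\ _{L}^{\alpha }\delta +\ ^{\alpha }\mathbf{\check{D}}-\frac{i}{v}ad_{Wick}(r)$ has leading term $-\ _{L}^{\alpha }\delta $ in the total degree $Deg$, the flatness equation $\ _{L}^{\alpha }\widehat{\mathcal{D}}\,\tau (f)=0$ can be solved order by order in $Deg$: at each step $\ _{L}^{\alpha }\delta \,\tau (f)^{(k+1)}$ is forced to equal the lower--order expression $\ ^{\alpha }\mathbf{\check{D}}\tau (f)^{(k)}-\frac{i}{v}\sum_{l=0}^{k}ad_{Wick}(r^{(l+2)})(\tau (f)^{(k-l)})$, and applying $\ _{L}^{\alpha }\delta ^{-1}$ yields exactly the stated recursion for $\tau (f)^{(k+1)}$. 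I would verify inductively that the right--hand side at each stage is $\ _{L}^{\alpha }\delta $--closed (so that the recursion is consistent), which follows from $\ _{L}^{\alpha }\widehat{\mathcal{D}}^{2}=0$ and the normalization $\sigma (\tau (f))=f$; this shows $\tau $ is the unique flat lift with prescribed symbol, giving injectivity and surjectivity of $\sigma $ on flat sections.

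Next I would check that $\overset{\alpha }{\ast }$ is a genuine star product in the sense of (\ref{stpr}): that it is $v$--bilinear and $v$--adically continuous (clear from the degree filtration), that $\ _{0}C(\ ^{1}f,\ ^{2}f)=\ ^{1}f\,^{2}f$, and that the antisymmetric part of $\ _{1}C$ reproduces $i\{\ ^{1}f,\ ^{2}f\}$. The last point comes from expanding $\sigma (\tau (\ ^{1}f))\overset{\alpha }{\circ }\sigma (\tau (\ ^{2}f))$ to first order in $v$: the leading correction is governed by the bivector $\ _{L}^{\alpha }\mathbf{\Lambda }^{\beta \gamma }=\ _{L}^{\alpha }\theta ^{\beta \gamma }-i\ _{L}^{\alpha }\mathbf{g}^{\beta \gamma }$ entering $\overset{\alpha }{\circ }$ in (\ref{fpr}), whose antisymmetric part is $\ _{L}^{\alpha }\theta ^{\beta \gamma }$, and hence reproduces the Poisson bracket induced by $\ _{L}^{\alpha }\mathbf{\theta }$. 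Associativity is inherited from the pointwise associativity of $\overset{\alpha }{\circ }$ via the isomorphism $\sigma ^{-1}=\tau $, since $\ _{L}^{\alpha }\mathcal{W}_{\widehat{\mathcal{D}}}$ is a subalgebra of $(\ _{L}^{\alpha }\mathcal{W}\otimes \overset{\alpha }{\Lambda },\overset{\alpha }{\circ })$ because $\ _{L}^{\alpha }\widehat{\mathcal{D}}$ is a $\overset{\alpha }{\circ }$--derivation annihilating the two factors.

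I expect the main obstacle to be the verification that the recursion for $\tau (f)$ is well defined and consistent, that is, that at every stage the argument of $\ _{L}^{\alpha }\delta ^{-1}$ genuinely lies in the image of $\ _{L}^{\alpha }\delta $ so that $\ _{L}^{\alpha }\delta \,\tau (f)^{(k+1)}$ matches the required closure condition. This amounts to an induction that intertwines the recursion for $r$ from Theorem \ref{th2} with that for $\tau (f)$, and it relies essentially on the two commutation identities in (\ref{comf}) together with $\ _{L}^{\alpha }\delta ^{2}=0$; carrying out the bookkeeping with the mixed horizontal/vertical fractional indices is where the nonholonomic and fractional features must be handled with care, even though no genuinely new analytic phenomenon arises. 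Once this consistency is in place, the bijectivity of $\sigma $ and the transport of associativity are formal, so the fractional case reduces to the integer, holonomic computations of \cite{karabeg1}.
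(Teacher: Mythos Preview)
Your proposal is correct and follows essentially the same route as the paper: both rely on the standard Fedosov argument as in \cite{karabeg1}, noting that the Caputo fractional derivatives in N--adapted form enter only through the formal algebraic properties of $\ ^{\alpha }\mathbf{\check{D}}$, $\ _{L}^{\alpha }\delta$, and $\overset{\alpha }{\circ }$, so no new analytic input is required. The paper's own proof is a one--line reference to \cite{karabeg1}, whereas you have (correctly) unpacked the content of that reference---the Hodge decomposition, the inductive consistency of the recursion for $\tau$, the transport of associativity---so your write--up is strictly more informative but not methodologically different.
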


\begin{proof}
We may check by explicit computations similarly to those in \cite{karabeg1},
in our case, with fractional Caputo derivatives that such constructions give
a well defined star product. $\Box $
\end{proof}

\subsection{Cohomology classes of quantized fractional Lagrangians}

The characteristic class of a star product  is $(1/iv)[\theta
]-(1/2i)\varepsilon ,$ where $\varepsilon $ is the canonical class for an
underlying K\"{a}hler manifold, for nonholonomic Lagrange--Einstein--Finsler
spaces  we analysed this construction in Refs. \cite%
{vfedq1,vfedq2,vfedq3,vfed4}. This canonical class can be defined for any
almost complex manifold. We show how such a calculus of the crucial part of
the characteristic class $cl$ of the fractional star product $\overset{%
\alpha }{\ast }$ from  Theorem \ref{th3} can be performed. In explicit form,
we shall compute the coefficient $c_{0}$ at the zeroth degree of $v.$

A straightforward computation of $\ _{2}C$ from (\ref{stpr}), using
statements of Theorem \ref{th2}, results in a proof of

\begin{lemma}
\label{lem1}There is a unique fractional  2--form $\ _{L}^{\alpha
}\varkappa $ which can be expressed
\begin{equation*}
\ \ _{L}^{\alpha }\varkappa =-\frac{i}{8}\ _{L}^{\alpha }\mathbf{J}_{\tau
}^{\ \gamma ^{\prime }}\ _{L}^{\alpha }\widehat{\mathbf{R}}_{\ \gamma
^{\prime }\gamma \beta }^{\tau }\ _{L}^{\alpha }\mathbf{e}^{\gamma }\wedge \
_{L}^{\alpha }\mathbf{e}^{\beta }-i\ \ _{L}^{\alpha }\lambda ,
\end{equation*}%
where the exact  1--form $\ _{L}^{\alpha }\lambda =d \
_{L}^{\alpha }\mu ,$ for $\ _{L}^{\alpha }\mu =\frac{1}{6}\ _{L}^{\alpha }%
\mathbf{J}_{\tau }^{\ \alpha ^{\prime }}\ _{L}^{\alpha }\widehat{\mathbf{T}}%
_{\ \alpha ^{\prime }\beta }^{\tau }\ _{L}^{\alpha }\mathbf{e}^{\beta },$
with nontrivial components of curvature and torsion defined by the canonical d--connection.
\end{lemma}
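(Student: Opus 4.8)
The plan is to carry out exactly the computation the paragraph announces: expand the star product (\ref{stpr}) to second order in $v$, compute the cochain $\ _2C$, and read off the 2-form sitting in its antisymmetric part. First I would assemble the low-degree Fedosov data from Theorem \ref{th2}. Its recursion gives $r^{(2)}=\ _L^\alpha\delta^{-1}\widehat{\mathcal{T}}$ and $r^{(3)}=\ _L^\alpha\delta^{-1}(\widehat{\mathcal{R}}+\ ^\alpha\mathbf{\check{D}}r^{(2)}-\frac{i}{v}r^{(2)}\overset{\alpha}{\circ}r^{(2)})$, and only the part of $r$ that is linear in $z$ (carried by $r^{(2)}$, hence by the torsion $\ _L^\alpha\widehat{\mathbf{T}}$) and the part quadratic in $z$ (carried by $r^{(3)}$, hence by the curvature $\ _L^\alpha\widehat{\mathbf{R}}$) survive into $\ _2C$ after the projection $\sigma$. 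In parallel I would use the recursion of Theorem \ref{th3} to write $\tau(f)=f+\ _L^\alpha\delta^{-1}(\ ^\alpha\mathbf{\check{D}}f)+\tau(f)^{(2)}+\dots$, substitute both quantization maps into $\ ^1f\overset{\alpha}{\ast}\ ^2f=\sigma(\tau(\ ^1f))\overset{\alpha}{\circ}\sigma(\tau(\ ^2f))$, and collect the coefficient of $v^2$; expanding the Wick product (\ref{fpr}) and setting the fiber variables to zero produces $\ _2C(\ ^1f,\ ^2f)$ as an explicit bidifferential operator in the N--adapted frame $\ _L^\alpha\mathbf{e}_\alpha$.

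Next I would antisymmetrize, forming $\ _2C(\ ^1f,\ ^2f)-\ _2C(\ ^2f,\ ^1f)$; by the general structure of Fedosov quantization this is a closed-2--form symbol, and I define $\ _L^\alpha\varkappa$ (normalized by the $-i$ prefactor fixed by the $i$--conventions of (\ref{stpr})) to be that form. The contributions split cleanly. The curvature terms reach $\ _2C$ through $r^{(3)}$, and once the double $z$--differentiation of the Wick product is performed and $\sigma$ kills the fibers, they collapse to the single contraction $\ _L^\alpha\mathbf{J}_\tau^{\ \gamma'}\ _L^\alpha\widehat{\mathbf{R}}_{\ \gamma'\gamma\beta}^\tau\ _L^\alpha\mathbf{e}^\gamma\wedge\ _L^\alpha\mathbf{e}^\beta$; here the identity (\ref{acp}) is essential, since it lets me symmetrize the curvature in the contracted index pair so that the numerical coefficient consolidates to $-\frac{i}{8}$.

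The remaining contributions all carry the torsion $\ _L^\alpha\widehat{\mathbf{T}}$, entering through $r^{(2)}$ directly and through the cross-terms $\ ^\alpha\mathbf{\check{D}}r^{(2)}$ and $r^{(2)}\overset{\alpha}{\circ}r^{(2)}$ inside $r^{(3)}$. I would show that, as a 2-form, this torsion group is exact and equals $-i\,d\ _L^\alpha\mu$ with $\ _L^\alpha\mu=\frac{1}{6}\ _L^\alpha\mathbf{J}_\tau^{\ \alpha'}\ _L^\alpha\widehat{\mathbf{T}}_{\ \alpha'\beta}^\tau\ _L^\alpha\mathbf{e}^\beta$; the Bianchi-type identity $\ _L^\alpha\delta\widehat{\mathcal{R}}=\ ^\alpha\mathbf{\check{D}}\widehat{\mathcal{T}}$ established inside the proof of Theorem \ref{th2} is what converts the frame-derivative of the torsion contraction into the required combination and pins the coefficient $\frac{1}{6}$. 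Uniqueness of $\ _L^\alpha\varkappa$ then follows because $\ _2C$ is produced without any freedom from the canonical data of Theorem \ref{th2}, so its antisymmetric tensorial symbol is a single well-defined 2-form.

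The step I expect to be the main obstacle is precisely the torsion bookkeeping: in the torsion-free almost-K\"ahler case of \cite{karabeg1} the form $\ _L^\alpha\mu$ is absent, so all the delicate work lies in tracking how $\widehat{\mathcal{T}}$ propagates through $r^{(2)}$ and through the quadratic Wick term of $r^{(3)}$, and in verifying that these scattered pieces really reassemble into the single exact form $d\ _L^\alpha\mu$ with no non-exact remainder. Keeping the Wick combinatorics consistent (the factors of $i/2$ and the $z$--degree counting hidden in $\ _L^\alpha\delta^{-1}$) is what ultimately fixes both normalizations $-\frac{i}{8}$ and $\frac{1}{6}$.
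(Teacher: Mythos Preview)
Your proposal is correct and follows essentially the same approach as the paper, which states only that ``a straightforward computation of $\ _{2}C$ from (\ref{stpr}), using statements of Theorem \ref{th2}, results in a proof'' of the lemma. You have in fact spelled out that computation in considerably more detail than the paper does: the use of the recursion for $r^{(2)},r^{(3)}$ and for $\tau(f)$, the role of the identity (\ref{acp}) in consolidating the curvature contribution, and the Bianchi-type relation $\ _{L}^{\alpha}\delta\widehat{\mathcal{R}}=\ ^{\alpha}\mathbf{\check{D}}\widehat{\mathcal{T}}$ for the torsion piece are exactly the ingredients the paper's one-line justification implicitly relies on.
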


This allows us to compute the Chern--Weyl form
\begin{eqnarray*}
 \ _{L}^{\alpha }\gamma  &=&-iTr\left[ \left( h\Pi ,v\Pi \right) \
_{L}^{\alpha }\widehat{\mathbf{R}}\left( h\Pi ,v\Pi \right) ^{T}\right] =-iTr%
\left[ \left( h\Pi ,v\Pi \right) \ _{L}^{\alpha }\widehat{\mathbf{R}}\right]
\\
&=&-\frac{1}{4}\ _{L}^{\alpha }\mathbf{J}\ _{\tau }^{\ \alpha ^{\prime }}\
_{L}^{\alpha }\widehat{\mathbf{R}}_{\ \alpha ^{\prime }\alpha \beta }^{\tau
}\ _{L}^{\alpha }\mathbf{e}^{\alpha }\wedge \ _{L}^{\alpha }\mathbf{e}%
^{\beta }
\end{eqnarray*}%
to be closed. By definition, the canonical class is $\ _{L}^{\alpha
}\varepsilon \doteqdot \lbrack \ \ _{L}^{\alpha }\gamma ].$\footnote{%
For simplicity, we recall the definition of the canonical class $\varepsilon
$ of an almost complex manifold $(M,\mathbb{J})$ of integer dimension and
redefine it for $\ ^{N}TTM=hTM\oplus vTM.$  The
distinguished complexification of such second order tangent bundles is
introduced in the form
 $T_{\mathbb{C}}\left(\ ^{N}TTM\right) =T_{\mathbb{C}}\left( hTM\right)
\oplus T_{\mathbb{C}}\left( vTM\right).$ %
For such nonholonomic bundles, the class $\ ^{N}\varepsilon $ is the first
Chern class of the distributions $T_{\mathbb{C}}^{\prime }\left( \
^{N}TTM\right) =T_{\mathbb{C}}^{\prime }\left( hTM\right) \oplus T_{\mathbb{C%
}}^{\prime }\left( vTM\right) $ of couples of vectors of type $(1,0)$ both
for the h-- and v--parts. We can calculate both for integer and fractional
dimensions the canonical class $^{L}\varepsilon $ (we put the label $L$ for
the constructions canonically defined by a regular Lagrangian $L)$ for the
almost K\"{a}hler model of a Lagrange space $L^{n}.$ We take the canonical
d--connection $\ ^{L}\widehat{\mathbf{D}}$ that it was used for constructing
$\ast $ and considers h- and v--projections
 $h\Pi =\frac{1}{2}(Id_{h}-iJ_{h})$ and
 $v\Pi =\frac{1}{2}(Id_{v}-iJ_{v}),$
where $Id_{h}$ and $Id_{v}$ are respective identity operators and $J_{h}$
and $J_{v}$ are  projection
operators onto corresponding $(1,0)$--subspaces. The matrix $\left( h\Pi
,v\Pi \right) \widehat{\mathbf{R}}\left( h\Pi ,v\Pi \right) ^{T},$ where $%
(...)^{T}$ denotes the transposition, is the curvature matrix of the
restriction of the connection $\ ^{L}\widehat{\mathbf{D}}$ to $T_{\mathbb{C}%
}^{\prime }\left( \ ^{N}TTM\right) .$ For fractional dimensions, such
formulas ''obtain'' corresponding let labels with $\alpha .$} These formulas
and  Lemma \ref{lem1} give the proof of
\begin{theorem}
The zero--degree cohomology coefficient
for the almost K\"{a}hler model of fractional Lagrange space  $\overset{\alpha }{\underline{L^{n}}}$ is computed
 $c_{0}(\overset{\alpha }{\ast })=-(1/2i)\ \ _{L}^{\alpha }\varepsilon ,$
 where the value $\ _{L}^{\alpha }\varepsilon $ is canonically defined by a
regular fractional  Lagrangian $\overset{\alpha }{L}(u).$
\end{theorem}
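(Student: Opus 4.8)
The plan is to assemble the characteristic class of the star product $\overset{\alpha}{\ast}$ from the Fedosov construction and extract its zeroth-order-in-$v$ coefficient, identifying it with the canonical class. The starting point is the general Fedosov formula for the characteristic class of a deformation quantization, namely $cl(\overset{\alpha}{\ast}) = (1/iv)[\ _{L}^{\alpha}\theta] - (1/2i)\ _{L}^{\alpha}\varepsilon + \dots$, where $[\ _{L}^{\alpha}\theta]$ is the de Rham class of the almost symplectic form. The coefficient $c_0$ is by definition the part of this class independent of $v$ (the zeroth-degree coefficient in the formal parameter), so the singular $(1/iv)[\theta]$ term does not contribute to $c_0$ and only the canonical-class correction survives. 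First I would invoke Lemma \ref{lem1}, which explicitly furnishes the $v$-independent 2--form $\ _{L}^{\alpha}\varkappa$ obtained by a direct computation of the bilinear operator $\ _2C$ in the star-product expansion (\ref{stpr}); this form encodes precisely the second-order contribution governing $c_0$.

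Next I would connect $\ _{L}^{\alpha}\varkappa$ to the Chern--Weyl representative $\ _{L}^{\alpha}\gamma$ of the canonical class. The footnote attached to the definition $\ _{L}^{\alpha}\varepsilon \doteqdot [\ _{L}^{\alpha}\gamma]$ sets up the h- and v--projectors $h\Pi = \frac{1}{2}(Id_h - iJ_h)$ and $v\Pi = \frac{1}{2}(Id_v - iJ_v)$ onto the $(1,0)$--subspaces of the complexified second-order tangent bundle, and expresses $\ _{L}^{\alpha}\gamma$ as the trace of the restricted curvature of the canonical d--connection $\ _c^{\alpha}\mathbf{D}$. Comparing the closed form $\ _{L}^{\alpha}\gamma = -\frac{1}{4}\ _{L}^{\alpha}\mathbf{J}_{\ \tau}^{\ \alpha'}\ _{L}^{\alpha}\widehat{\mathbf{R}}_{\ \alpha'\alpha\beta}^{\tau}\ _{L}^{\alpha}\mathbf{e}^{\alpha}\wedge\ _{L}^{\alpha}\mathbf{e}^{\beta}$ with the curvature-dependent part of $\ _{L}^{\alpha}\varkappa$ from Lemma \ref{lem1}, I would show that the two differ only by the exact 1--form contribution $\ _{L}^{\alpha}\lambda = d\,\ _{L}^{\alpha}\mu$, which is cohomologically trivial. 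Passing to cohomology classes, the exact piece drops, giving $[\ _{L}^{\alpha}\varkappa] = -\frac{1}{2}[\ _{L}^{\alpha}\gamma] = -\frac{1}{2}\ _{L}^{\alpha}\varepsilon$ up to the normalization factors of $i$ dictated by the conventions, so that $c_0(\overset{\alpha}{\ast}) = -(1/2i)\ _{L}^{\alpha}\varepsilon$.

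The main obstacle I anticipate is bookkeeping rather than conceptual: tracking the fractional Caputo-derivative N--adapted structure through the projector algebra and ensuring that the identity (\ref{acp}), $\ _{L}^{\alpha}\theta_{\varphi\tau}\ _{L}^{\alpha}\widehat{\mathbf{R}}_{\ \gamma\alpha\beta}^{\tau} = \ _{L}^{\alpha}\theta_{\gamma\tau}\ _{L}^{\alpha}\widehat{\mathbf{R}}_{\ \varphi\alpha\beta}^{\tau}$, is correctly used to symmetrize the curvature-times-$\mathbf{J}$ contraction so that the trace in $\ _{L}^{\alpha}\gamma$ matches the coefficient structure in $\ _{L}^{\alpha}\varkappa$. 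Because the almost K\"ahler d--connection $\ _c^{\theta}\overset{\alpha}{\mathbf{D}}$ carries nontrivial torsion, I would have to verify that the torsion contribution to $\ _{L}^{\alpha}\varkappa$ is exactly the exact form $d\,\ _{L}^{\alpha}\mu$ and therefore vanishes in cohomology, rather than altering the class. Since the whole computation is declared in the excerpt to follow the integer-dimensional and holonomic case of \cite{karabeg1} upon substituting Caputo derivatives in N--adapted form, I would finally appeal to that correspondence to confirm the normalization constants, treating the fractional case as a formal transcription whose only novelty is the replacement of ordinary partial derivatives by $\ _{\ _{1}x}\overset{\alpha}{\underline{\partial}}_{x}$ throughout.
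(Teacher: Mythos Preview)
Your proposal is correct and follows essentially the same route as the paper: invoke Lemma~\ref{lem1} to obtain the closed 2--form $\ _{L}^{\alpha}\varkappa$ from the $\ _{2}C$ term of the star product, compute the Chern--Weyl representative $\ _{L}^{\alpha}\gamma$ via the $(h\Pi,v\Pi)$ projectors, observe that the two differ only by the exact piece $-i\,d\,\ _{L}^{\alpha}\mu$, and read off $c_{0}(\overset{\alpha}{\ast})=-(1/2i)\ _{L}^{\alpha}\varepsilon$ in cohomology. The paper's own argument is extremely terse (``These formulas and Lemma~\ref{lem1} give the proof''), so your expanded outline in fact supplies the connecting steps the paper leaves implicit.
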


Finally we note that the formula from this Theorem can be directly applied
for the Cartan connection in Finsler geometry with $\overset{\alpha }{L}%
=\left( \overset{\alpha }{F}\right) ^{2},$ where $\overset{\alpha }{F}$ is the fundamental generating Finsler function in fractional Finsler geometry, and in certain fractional generalizations of Einstein and Ricci flow theories \cite{vrfrf,vrfrg}.

\section{Conclusions and Discussion}

\label{s5} In this paper we provided a generalization of Fedosov's method for quantizing the fractional Lagrange mechanics with Caputo fractional derivatives. We used a fundamental result that nonholonomic geometries (for certain classes of integro--differential distributions modeling fractional spaces \cite{vrfrf,vrfrg}) can be modeled as some almost K\"{a}hler configurations which can be quantized following  Karabegov and Schlichenmaier  ideas \cite{karabeg1}.

We argue that the approach to fractional calculus based on Caputo fractional derivative is a self--consistent comprehensive one allowing geometrization of fundamental field and evolution equations and their quantization at least in the meaning of deformation quantization theory.

In various directions of modern mathematics, physics, mathematical economics etc, there are also considered, and preferred, different fractional derivatives, for instance, the Riemann--Liouville (RL) derivative. It is a problem, at least technically,  to elaborate a well defined differential geometry with RL type fractional derivatives not resulting in zero acting on constants (see detailed discussions in \cite{vrfrf,vrfrg}). So, for such fractional calculus approaches we can not geometrize mechanical and field/evolution interactions in a standard form. In general, it is not clear how to define a RL--differential geometry which would mimic certain integer dimention type geometries.  As a result, we can not perform a RL--quantization following usual geometric/deformation methods.

Our proposal, is that for fractional models, for instance, with RL fractional derivative, we can geometrize the constructions, and elaborate quantum models taking the Caputo derivatives for certain background constructions and then to deform nonholonomically the geometric objects in order to re--adapt them and generate a necessary RL, or another ones, fractional theory.

\vskip5pt \textbf{Acknowledgement: } S. V. is grateful to \c{C}ankaya
University for support of his research on fractional calculus, geometry and
applications.

\end{document}